\documentclass{amsart}

\usepackage{amsmath}
\usepackage{amssymb}
\usepackage{amsthm}
\usepackage{amsaddr}

\usepackage{graphicx}

\title[The discrete Toda equation revisited]{
The discrete Toda equation revisited\\
dual $\beta$-Grothendieck polynomial, ultradiscretization, and static solitons}
\author{Shinsuke Iwao and Hidetomo Nagai}
\address{Department of Mathematics, Tokai University, 4-1-1, Kitakaname, Hiratsuka, Kanagawa 259-1292, Japan.}
\email{iwao@tokai.ac.jp}
\email{hdnagai@tokai-u.jp}

\date{\today}

\newtheorem{thm}{Theorem}[section]

\newtheorem{prop}[thm]{Proposition}

\newtheorem{lemma}[thm]{Lemma}
\newtheorem{rem}[thm]{Remark}


\newcommand\ZZ{\ensuremath{\mathbb{Z}}}
\newcommand\CC{\ensuremath{\mathbb{C}}}
\newcommand\RR{\ensuremath{\mathbb{R}}}

\def\zet#1{\left\vert {#1} \right\vert}

\newcounter{xax}%
\newcounter{yax}%


\usepackage{mathtools}

\makeatletter

\def\mWord#1{\text{\ $\mWord@A{#1}$}} 
\def\mWord@A#1{%
\bgroup
\let\\=\crcr
\def\,{& &}
\def\u##1{\underline{##1}}
\def\d##1{\dot{##1}}
  \vcenter{\offinterlineskip
      \halign{
       &\vrule width 0pt height 8pt depth 3pt {\hfil${##}$\hfil} \!\! & \!\!{##},\! \cr #1 \crcr
      }
  }
\egroup
}

\makeatother

\newcommand{\Array}[1]{\left(\!\mWord{#1}\!\right)}
\def\V#1{\pmb{#1}}

\newcommand{\ep}{\varepsilon}

\begin{document}
\maketitle

\begin{abstract}
This paper presents a study of the discrete Toda equation
\[
(\tau_n^t)^2+\tau_{n-1}^t\tau_{n+1}^t=\tau_n^{t-1}\tau_n^{t+1},
\]
that was introduced in 1977. In this paper, it has been proved that the algebraic solution of the discrete Toda equation, obtained via the Lax formalism, is naturally related to the dual Grothendieck polynomial, which is a $K$-theoretic generalization of the Schur polynomial.
A tropical permanent solution to the ultradiscrete Toda equation has also been derived. The proposed method gives a tropical algebraic representation of static solitons. Lastly, a new cellular automaton realization of the ultradiscrete Toda equation has been proposed.
\end{abstract}

\section{Introduction}

\subsection{Algebraic solution to the discrete Toda equation}
The Toda equation
\[
  \frac{d}{dt^2}\log(1+V_n(t))=V_{n+1}(t)-2V_n(t)+V_{n-1}(t)
\]
was proposed as a model equation of motion in a one-dimensional lattice of particles with the nearest neighbor interaction~\cite{toda1967wave}.  Today the equation is known as a good example of an integrable equation owing to its rich structures.  
In this paper, we intend to study time-discretization of the Toda equation\footnote{There exist at least two 'discrete Toda equations' which are famous and well-investigated (as far as we know).
See Remark \ref{rem:no.1}.}, which was originally given by Hirota~\cite{hirota1977}:
\begin{equation}\label{eq:originalform}
\frac{u_n^{t-1}u_n^{t+1}}{(u_n^{t})^2}=\frac{(1-\delta ^2+\delta^2u_{n-1}^{t})(1-\delta ^2+\delta^2u_{n+1}^{t})}{(1-\delta ^2+\delta^2u_n^{t})^2}.
\end{equation}
The above equation (\ref{eq:originalform}) boils down to the following bilinear form.
\begin{equation}\label{eq:bilinear}
(\tau_n^t)^2+\tau_{n-1}^t\tau_{n+1}^t=\tau_n^{t-1}\tau_n^{t+1}
\end{equation}
through the variable transformation given by
\[
u_n^t=
\frac{1-\delta^2}{\delta^2}\frac{\tau_{n}^{t+1}\tau_{n+2}^{t+1}}{(\tau_{n+1}^{t+1})^2}.
\]
Moreover, if we define two new variables given by
\begin{equation}\label{eq:a-and-b}
a_n^t:=\frac{\tau_n^t\tau_{n+1}^{t+1}}{\tau_n^{t+1}\tau_{n+1}^{t}},\qquad
b_n^t:=\frac{\tau_n^t\tau_{n+2}^{t+1}}{\tau_{n+1}^t\tau_{n+1}^{t+1}},
\end{equation}
the equation (\ref{eq:bilinear}) can be rewritten as 
\begin{equation}\label{eq:local}
a_n^{t+1}+b_{n-1}^{t+1}=a_n^t+b_n^t,\qquad
a_{n+1}^{t+1}b_n^t=a_n^tb_n^{t+1}.
\end{equation}
One can recover (\ref{eq:originalform}) from (\ref{eq:local}) by using the following relation.
\begin{equation}\label{eq:u}
u_n^t=\frac{1-\delta^2}{\delta^2}
\frac{b_{n}^{t}}{a_{n}^{t}}.
\end{equation}

We consider the simultaneous equations (\ref{eq:local}) with the boundary condition 
\begin{equation}\label{eq:boundarycondition}
a_0^t=a_{N+1}^t=1,\qquad b_0^t=b_{N}^t=0
\end{equation}
for some integer $N>0$.
Let
\[
X^t=
\left(
\begin{array}{cccc}
a^t_1 & 1 &  &  \\ 
 & a^t_2 & \ddots &  \\ 
 &  & \ddots & 1 \\ 
 &  &  & a^t_N
\end{array} 
\right),\qquad
Y^t=
\left(
\begin{array}{cccc}
1 & &  &  \\ 
-b^t_1 & 1 &  &  \\ 
 & \ddots & \ddots &  \\ 
 &  & -b_{N-1}^t & 1
\end{array} 
\right)
\]
be $N\times N$ matrices.
The simultaneous equation (\ref{eq:local}), therefore, admits the use of the discrete Lax formulation given as follows:
\begin{equation}\label{eq:Lax}
X^{t+1}Y^t=Y^{t+1}X^t.
\end{equation}
We define $L^t:=(Y^t)^{-1}X^t$.
Thus, (\ref{eq:Lax}) could be written as
\begin{equation}\label{eq:LaxMatrixForm}
Y^tL^{t}=L^{t+1}Y^t\qquad \mbox{or}\qquad X^tL^t=L^{t+1}X^t.
\end{equation}
Similar to other classical integrable systems (see, for example, \cite{hirota1980direct, hirota2004direct, KOSTANT1979195, sklyanin2013bispectrality}), one can construct algebraic solutions for arbitrary initial values (values at time $t=0$) of the discrete Toda equation via the Lax representation (\ref{eq:LaxMatrixForm}).

\begin{rem}\label{rem:no.1}
There exist at least two famous 'discrete Toda equations'.
One is the equation $(\ref{eq:originalform})$, which we investigate in this study, and the other is the one expressed by the following bilinear form
\begin{equation}\label{eq:otherform}
(\tau_n^t)^2+\tau_{n-1}^{t+1}\tau_{n+1}^{t-1}
=\tau_n^{t+1}\tau_n^{t-1}.
\end{equation}
(Compare with $(\ref{eq:bilinear})$.)
\end{rem}

\subsection{Algebraic solution and dual Grothendieck polynomials}

Obtaining an algebraic solution to the discrete Toda equation, through use of the Lax formulation (\ref{eq:LaxMatrixForm}), with an arbitrary (generic) initial condition is relatively straightforward. 
(See \S \ref{sec:2.1}--\S \ref{sec:2.3} for details.)
If all eigenvalues of the Lax matrix $L^t$ degenerate to one value (as (\ref{eq:degenetate}) in \S \ref{sec:2.4}), there exists a natural algebraic relation between these solutions and the dual Grothendieck polynomials, which are essentially $K$-theoretic analogues of Schur polynomials~\cite{lam2007combinatorial}.
Interestingly, 'the other' discrete Toda equation (\ref{eq:otherform}) corresponds to usual Schur polynomials (Remark \ref{rem:comparement}).

Recently, several researchers have reported interesting relations between '$K$-theoretic' objects and classical integrable systems. Motegi and Sakai~\cite{motegi2013vertex} discovered a remarkable relation between Grothendieck polynomials and algebraic solutions to certain integrable systems (TASEP). Very recently, Ikeda, Iwao, and Maeno \cite{ikeda2017peterson} constructed a ring isomorphism between the quantum $K$-theory of the complex flag variety $Fl_n$ and the $K$-theory of the affine Grassmannian $Gr_{SL_n}$ by using the mechanics of the relativistic Toda equation.
From this, it may be inferred that the discrete Toda equation $(\ref{eq:originalform})$ is amenable to {\it Grothendieck polynomial-type solutions}, while 'the other' discrete Toda equation $(\ref{eq:otherform})$ is amenable to {\it Schur polynomial-type solutions}.

\subsection{Ultradiscretization}
Another topic discussed in this paper is ultradiscretization.
Let us introduce the transformations $u_n^t=e^{\frac{U_n^t}{\varepsilon }}$ and $\delta =e^{-\frac{L}{2\varepsilon }}$ with a parameter $\ep>0$ and a positive constant $L>0$ for (\ref{eq:originalform}).
Then, by applying $\ep \log$ to both sides and taking the limit $\ep\to 0^+$, we obtain the ultradiscrete Toda equation~\cite{matsukidaira1997toda} given by
\begin{equation}  \label{eq:uToda}
  U_{n}^{t+1}-2U_{n}^{t}+U_{n}^{t-1} = \max[ 0, U_{n+1}^{t}-L]-2\max[ 0, U_{n}^{t}-L]+\max[ 0, U_{n-1}^{t}-L].  
\end{equation}  
In \cite{matsukidaira1997toda}, Matsukidaira {\it et al.}\ derived an soliton solution to (\ref{eq:uToda}) by ultradiscretizing the soliton solution {to} the discrete Toda equation (\ref{eq:originalform}).  
Their solution is expressed by 
\begin{equation}  \label{Nsolitonsolution}
\begin{aligned}
  &U^t_n = T^t_{n+1}-2T^t_n+T^t_{n-1}, \\
  &T^t_n = \max_{\mu _j\in \{0, 1 \}}\left[ \sum _{j=1}^N \mu _j S_j(t, n) -\sum _{1\le i<j\le N} \mu _i \mu_j (P_j+\sigma _i\sigma _j Q_j)\right], \\
  &S_j(t, n) = P_jn -\sigma _j Q_j t +C_j, \\
  &0\le P_1\le P_2\le \dots \le P_N, \quad Q_j= \max[0, P_j-L ], \quad \sigma _j\in \{-1, 1 \}.    
\end{aligned}
\end{equation}
Here $N$, $L$ are positive integers, and $P_j$, $C_j$ are arbitrary parameters.  
The operator $\max_{\mu _j\in \{0, 1 \}}f(\mu _1, \mu _2, \dots, \mu _N)$ denotes the maximum value amongst $2^N$ possible values of $f(\mu _1, \mu _2, \dots, \mu _N)$ obtained by replacing each $\mu _j$ by $0$ or $1$.  
It is known that the solution $U^t_n$ defined by (\ref{Nsolitonsolution}) possesses properties of solitary waves and soliton interactions~\cite{matsukidaira1997toda}.  
It can be verified that $U_n^t$ always takes non-negative values.  On the other hand, Hirota proposed another solution to (\ref{eq:uToda}), which is called 'the static-soliton'~\cite{hirota2009new}, and is expressed by
\begin{equation}\label{eq:Hirotasolution}
U^t_n = T^t_{n+1}-2T^t_n+T^t_{n-1}, \quad
T^t_n = C \sum _{j=j_0}^{j_1} \min[0, n-j],
\end{equation}
where $C$ is a positive parameter and $j_0\le j_1$ are integers.  One can verify that now $U_n^t$ (\ref{eq:Hirotasolution}) may take negative values, which implies that the equation (\ref{eq:uToda}) should be amenable to different solutions.
It should be natural to expect the existence of solutions to the discrete Toda equation (\ref{eq:local}) whose ultradiscretization yields static-solitons.
(Here we would like to note that Hirota~\cite{hirota2009new} showed there is no time independent solution, to the discrete Toda equation, other than the trivial solution.)

In this paper, we aim to provide an answer to the question by constructing {\it tropical permanent solutions} for arbitrary initial values of the ultradiscrete Toda equation.
More precisely, we intend to show that the tropical tau function $T_n^t$ (\ref{eq:tropicalT}) can be used to solve (\ref{eq:uToda}) by setting 
\[
U_n^t=A_n^t-B_n^t+L,\quad 
A_n^t=T^t_n+ T^{{t+1}}_{n+1}-T^{t+1}_{n}-T ^t_{n+1}, \quad  B_n^t=T^t_n+T^{{t+1}}_{n+2}-T^t_{n+1}- T^{{t+1}}_{n+1}.
\]
See \S \ref{sec:3.2} for details.
The new parameters $A_n^t,B_n^t$ represent the ultradiscretization of $a_n^t,b_n^t$, which satisfy the evolution equation (\ref{eq:udToda}).
A new cellular automaton realization of the system $\{A_n^t,B_n^t\}$ is proposed in \S \ref{sec:3.3}.

\subsection{Organization of the paper}

In Section \ref{sec:2}, we provide an algebraic solution to the discrete Toda equation (\ref{eq:local}) with the boundary condition (\ref{eq:boundarycondition}) using the Lax formulation.
Although this sequence of calculations is an established practice, we, nonetheless, have provided its details in \S \ref{sec:2.1}--\S \ref{sec:2.3} in order to make this paper self-contained.
In \S \ref{sec:2.4}, dual Grothendieck polynomials, as special solutions to the discrete Toda equation (\ref{eq:originalform}), have been derived.

In Section \ref{sec:3}, the tropical permanent solution to the ultradiscrete Toda equation, obtained by ultradiscretizing the algebraic solution defined in the previous section, has been provided.
The solution realizes the behavior of the solution given in \cite{matsukidaira1997toda,hirota2009new}.
A new cellular automaton realization of the system is proposed in \S \ref{sec:3.3}.
A concrete example and concluding remarks are contained in Section \ref{sec:4}.

\section{Solution to the discrete Toda equation}\label{sec:2}

\subsection{Lax formulation and the spectrum problem}\label{sec:2.1}

An algebraic solution to the discrete Toda equation via Lax formulation (\ref{eq:LaxMatrixForm}) is derived as follows.
Let 
\[
f(\lambda):=\det(\lambda E_N-L^t)=\lambda^N-I_1\lambda^{N-1}+\cdots+(-1)^NI_N
\]
be the characteristic polynomial of $L^t$, which is $t$-invariant due to (\ref{eq:LaxMatrixForm}).
Define the $\CC$-algebra $\mathcal{O}=\CC[\lambda]/(f(\lambda))$, which is $N$ dimensional as a $\CC$-vector space.
At the same time, the spectral problem
\begin{equation}\label{eq:spectrum}
L^t\V{v}^t=\lambda \V{v}^t,\qquad (\V{v}^t\in \CC^N),
\end{equation}
which is equivalent to 
\begin{equation}\label{eq:spectrum2}
(\lambda Y^t-X^t)\V{v}^t=\V{0},
\end{equation} has been considered.
We denote $(i,j)^{\mathrm{th}}$ minor of the matrix $\lambda Y^t-X^t$ by $M_{i,j}$ and define
\[
\Delta_{i,j}:=(-1)^{i+j}\det M_{i,j}.
\]
Therefore, the vectors 
\[
\V{p}:=(\Delta_{N,1},\Delta_{N,2},\dots,\Delta_{N,N})^T,\qquad
\V{q}:=(\Delta_{1,1},\Delta_{1,2},\dots,\Delta_{1,N})^T
\]
satisfy the following properties.
\begin{enumerate}
\item Property A
\begin{itemize}
\item The $i^{\mathrm{th}}$ entry of $\V{p}$, which is denoted by $\V{p}_i$, is a monic polynomial of degree $(i-1)$ in $\lambda$.
\item The $i^{\mathrm{th}}$ entry of $\V{q}$, which is denoted by $\V{q}_i$, is of the form $\lambda^{N-1}\times (\mbox{a polynomial of degree $(N-i)$ in $\lambda^{-1}$})$.
\item If $\lambda$ is a root of $f(\lambda)$, both $\V{p}$ and $\V{q}$ are eigenvectors of $L^t$.
In other words, $\V{p}$ and $\V{q}$ are solutions of the spectral problem (\ref{eq:spectrum}).
\end{itemize}
\item Property B
\begin{itemize}
\item The constant term of $\V{p}_i$ is $(-1)^{i-1}a_1^ta_2^t\cdots a_{i-1}^t$.
\item The coefficient of $\lambda^{N-1}$ of $\V{q}_i$ is $b_1^tb_2^t\cdots b_{i-1}^t$．
\end{itemize}
\end{enumerate}

Thus, it can be directly proved that as a matrix over $\mathcal{O}$, the co-rank of $\lambda Y^t-X^t$ is $1$.
Hence, the eigenvector of $L$ must be unique up to a constant multiple.
In terms of $\mathcal{O}$, we have the following lemma.
\begin{lemma}
Let $\CC[\lambda]\to \mathcal{O}$ be the natural surjection and $\overline{\V{p}},\overline{\V{q}}\in \mathcal{O}^N$ be the image of $\V{p},\V{q}\in \CC[\lambda]^N$.
Thus, there exists some $F\in \mathcal{O}$ such that
\begin{equation}\label{eq:inverseproblem}
\overline{\V{q}}=-F\cdot \overline{\V{p}}.
\end{equation}
$($The minus sign is used for the convenience of the calculation below$)$
\end{lemma}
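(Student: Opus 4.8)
The plan is to realize both $\V{p}$ and $\V{q}$ as columns of the classical adjugate of $A:=\lambda Y^t-X^t$, and then to compute $\ker\overline{A}$ over $\mathcal{O}$ explicitly from the tridiagonal shape of $A$. First I would record that $\det A=f(\lambda)$: since $Y^t$ is lower unitriangular, $\det Y^t=1$, so $\det A=\det Y^t\cdot\det(\lambda E_N-L^t)=f(\lambda)$. Unwinding $\Delta_{i,j}=(-1)^{i+j}\det M_{i,j}$, the adjugate entry $\mathrm{adj}(A)_{i,j}$ equals $\Delta_{j,i}$, so $\V{p}=(\Delta_{N,1},\dots,\Delta_{N,N})^T$ is exactly the $N$-th column of $\mathrm{adj}(A)$ and $\V{q}=(\Delta_{1,1},\dots,\Delta_{1,N})^T$ is its first column. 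From $A\cdot\mathrm{adj}(A)=\det(A)\,E_N=f(\lambda)E_N$ I read off, in $\CC[\lambda]^N$,
\[
A\V{p}=f(\lambda)\,\V{e}_N,\qquad A\V{q}=f(\lambda)\,\V{e}_1 .
\]
Reducing modulo $f(\lambda)$ gives $\overline{A}\,\overline{\V{p}}=\overline{A}\,\overline{\V{q}}=\V{0}$ in $\mathcal{O}^N$; that is, both $\overline{\V{p}}$ and $\overline{\V{q}}$ lie in $\ker\overline{A}$. Working over $\CC[\lambda]$ and only reducing at the end has the advantage that no squarefreeness or invertibility assumption on $f$ is needed.

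The heart of the argument is to show $\ker\overline{A}=\mathcal{O}\cdot\overline{\V{p}}$, a free module of rank one. Here I use that $A$ is tridiagonal, with diagonal $\lambda-a_i^t$, superdiagonal entries $-1$, and subdiagonal entries $-\lambda b_{i-1}^t$. Writing $\overline{A}\,\V{z}=\V{0}$ row by row, the first $N-1$ rows form the three-term recurrence
\[
z_2=(\lambda-a_1^t)z_1,\qquad z_{i+1}=(\lambda-a_i^t)z_i-\lambda b_{i-1}^t z_{i-1}\quad(2\le i\le N-1),
\]
in which the coefficient of the newly determined $z_{i+1}$ is the superdiagonal entry $-1$, a unit of $\mathcal{O}$. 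Hence, for any prescribed $z_1\in\mathcal{O}$, the entries $z_2,\dots,z_N$ are uniquely determined $\mathcal{O}$-linear expressions in $z_1$, so the common kernel of the first $N-1$ rows is free of rank one. By Property A the entry $\V{p}_1=\Delta_{N,1}$ is the monic degree-zero polynomial, i.e.\ $\overline{\V{p}}_1=1$, and $\overline{\V{p}}$ satisfies all $N$ rows, hence in particular the first $N-1$; by uniqueness the solution with first entry $z_1$ is precisely $z_1\overline{\V{p}}$, so this common kernel equals $\mathcal{O}\cdot\overline{\V{p}}$. Since $\ker\overline{A}$ is contained in it while $\overline{\V{p}}\in\ker\overline{A}$, we obtain $\ker\overline{A}=\mathcal{O}\cdot\overline{\V{p}}$ (which also re-derives the co-rank-one statement quoted above). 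Finally, $\overline{\V{q}}\in\ker\overline{A}=\mathcal{O}\cdot\overline{\V{p}}$ yields a unique $c\in\mathcal{O}$ with $\overline{\V{q}}=c\,\overline{\V{p}}$; setting $F:=-c$ proves (\ref{eq:inverseproblem}), and comparing first entries identifies $F=-\overline{\V{q}}_1=-\Delta_{1,1}\bmod f(\lambda)$.

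The step I expect to be the main obstacle is the kernel computation, precisely because $\mathcal{O}=\CC[\lambda]/(f(\lambda))$ need not be a field: when $f$ has repeated roots it carries nilpotents, so one cannot simply invoke ``the eigenspace is one-dimensional'' evaluated at the roots of $f$. The tridiagonal recurrence circumvents this by producing an honest $\mathcal{O}$-module isomorphism $z_1\mapsto z_1\overline{\V{p}}$ from $\mathcal{O}$ onto $\ker\overline{A}$, valid over the Artinian ring $\mathcal{O}$, the crucial structural input being that every superdiagonal entry equals the unit $-1$ so that the forward recurrence is everywhere well defined. Correspondingly I would phrase the adjugate identity over $\CC[\lambda]$ and reduce modulo $f(\lambda)$ only at the final step, so that the whole argument holds verbatim whether or not $f$ is squarefree.
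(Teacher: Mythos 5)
Your proof is correct and follows essentially the same route the paper sketches: both $\overline{\V{p}}$ and $\overline{\V{q}}$ lie in the kernel of $\lambda Y^t-X^t$ viewed over $\mathcal{O}$, and that kernel is rank one, so the two vectors are proportional. The only real difference is rigor: where the paper merely asserts the two key facts (``it can be directly proved that \dots the co-rank of $\lambda Y^t-X^t$ is $1$''), you actually prove them---kernel membership via the adjugate identity $A\,\mathrm{adj}(A)=f(\lambda)E_N$ over $\CC[\lambda]$ rather than pointwise at roots, and rank-oneness via the tridiagonal forward recurrence whose pivot is the unit $-1$---in a form that remains valid when $f$ has repeated roots and $\mathcal{O}$ has nilpotents, which is exactly the degenerate case $f(\lambda)=(\lambda-\gamma)^N$ the paper needs in \S 2.4.
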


Now, let us consider the inverse problem.
\begin{center}
for given $F\in \mathcal{O}$, recover $\V{p}$ and $\V{q}$ with (\ref{eq:inverseproblem}).
\end{center}
The answer is as follows.
Equation (\ref{eq:inverseproblem}) is rewritten as 
\begin{equation}\label{eq:linearproblem}
\left(
\begin{array}{cccc@{\ \ \vrule\ \ }cccc}
\ast & \ast & \cdots & \beta_1 & 1 &  &  &  \\ 
 & \ast & \cdots & \beta_2 & \alpha_2 & 1 &  &  \\ 
 &  &   \ddots & \vdots & \vdots & \vdots & \ddots   \\ 
 &  &  &   \beta_N& \alpha_N & \ast & \cdots & 1
\end{array} 
\right)
\Array{1\\\lambda\\ \vdots\\ \lambda^{N-1}\\ F\\F\lambda\\\vdots\\F\lambda^{N-1}}
\equiv \V{0}\qquad \mod f(\lambda).
\end{equation}
Each $\ast$ is a coefficient of $\V{p}_i$ or $\V{q}_i$ as a polynomial in $\lambda$.
Therefore, we have
\begin{equation}\label{eq:atoalpha}
\alpha_i=(-1)^{i-1}a_1^ta_2^t\cdots a_{i-1}^t,\quad
\beta_i=b_1^tb_2^t\dots b_{i-1}^t.
\end{equation}
By applying Cramer's rule to the matrix equation (\ref{eq:linearproblem}), one can express the entries of the matrix to be ratio of determinants.
Let 
$$
\V{c}:\mathcal{O}\to \CC^N
$$
be an arbitrary linear isomorphism.
We, therefore, have
\begin{equation}\label{eq:alpha}
\alpha_i=(-1)^{i-1}\frac{
\zet{\V{c}(\lambda^{i-1}),\V{c}(\lambda^{i}),\dots,\V{c}(\lambda^{N-1}),
\V{c}(F\lambda),\V{c}(F\lambda^{2}),\dots,\V{c}(F\lambda^{i-1})
}
}
{
\zet{\V{c}(\lambda^{i-1}),\V{c}(\lambda^{i}),\dots,\V{c}(\lambda^{N-1}),
\V{c}(F),\V{c}(F\lambda),\dots,\V{c}(F\lambda^{i-2})
}
}
\end{equation}
and 
\begin{equation}\label{eq:beta}
\beta_i=(-1)^i
\frac{
\zet{\V{c}(\lambda^{i-1}),\V{c}(\lambda^{i}),\dots,\V{c}(\lambda^{N-2}),
\V{c}(F),\V{c}(F\lambda),\dots,\V{c}(F\lambda^{i-1})
}
}
{
\zet{\V{c}(\lambda^{i-1}),\V{c}(\lambda^{i}),\dots,\V{c}(\lambda^{N-1}),
\V{c}(F),\V{c}(F\lambda),\dots,\V{c}(F\lambda^{i-2})
}
}.
\end{equation}
Note that they are invariant under any exchange of $\V{c}$.
As long as the denominators are not $0$, they recover values of $a_n^t$ and $b_n^t$.

\begin{rem}\label{rem:propor}
The expressions $(\ref{eq:alpha},\ref{eq:beta})$ are invariant also under the transform $F\mapsto cF$ $(c\in \CC^\times)$.
Thus, we may assume $F$ to be an element of $\mathcal{O}/\CC^\times$ without any loss of generality.
\end{rem}

In the remaining part of this section, we denote $F=F^t$, $\V{p}=\V{p}^t$, and $\V{q}=\V{q}^t$, {\it etc.}\ to emphasize the $t$-dependencies of these quantities.
From the discrete Lax equation (\ref{eq:LaxMatrixForm}) and (\ref{eq:spectrum}), we have a spectral problem at time $t+1$ given by 
\begin{equation}\label{eq:timeevolution}
L^{t+1}(X^t\V{v}^t)=\lambda (X^t\V{v}^t),\qquad
L^{t+1}(Y^t\V{v}^t)=\lambda (Y^t\V{v}^t).
\end{equation}
Let $\V{p}':=Y^t\V{p}^t$ and $\V{q}'=X^t\V{q}^t$.
Due to the shapes of the matrices $X^t$, $Y^t$ and Property A, the pair $(\V{p}',\V{q}')$ also satisfies Property A.
Because any pair of vectors with Property A uniquely restores the Lax matrix $L$ with Property B, we have 
\[
\V{p}^{t+1}=Y^t\V{p}^t,\qquad \V{q}^{t+1}=\theta X^t\V{q}^t,\quad (\exists \theta\in \CC^\times).
\]
Substituting them into (\ref{eq:inverseproblem}), we obtain
\[
\theta\overline{X^t\V{q}^t}=-F^{t+1}\overline{Y^t\V{p}^t},
\]
which implies
\[
\theta \lambda \overline{\V{q}^t}=\theta \overline{(Y^t)^{-1}X^t\V{q}^t}=-F^{t+1}\overline{\V{p}^t}.
\]
Comparing with (\ref{eq:inverseproblem}), we finally derive
\begin{equation}\label{eq:timeevoofF}
F^{t+1}=\lambda\theta \cdot F^t.
\end{equation}
Although this equation contains the unknown constant $\theta$，the expression (\ref{eq:timeevoofF}) still determines the time evolution of $L^t$ without any ambiguity.
See Remark \ref{rem:propor}.

\subsection{The determinantal formula for the tau function}

From (\ref{eq:timeevoofF}) and the remark \ref{rem:propor}, one may identify $F=F^t$ with $\lambda^tF^0$ without any problem.
Hereafter, we assume any root of $f(\lambda)$ to be non-zero.
Let $M_\lambda:\CC^N\to \CC^N$ represent the $\CC$-linear map$\colon$
\[
\CC^N\stackrel{\V{c}^{-1}}{\longrightarrow}\mathcal{O}\stackrel{\times \lambda}{\longrightarrow} \mathcal{O}\stackrel{\V{c}}{\longrightarrow}\CC^N.
\]
By assumption, this is invertible.
Let $D=\det{M_\lambda}\neq 0$.
The numerator of (\ref{eq:alpha}) can be rewritten as 
\begin{align*}
&\zet{\V{c}(\lambda^{i-1}),\V{c}(\lambda^{i}),\dots,\V{c}(\lambda^{N-1}),
\V{c}(F\lambda),\V{c}(F\lambda^{2}),\dots,\V{c}(F\lambda^{i-1})
}\\
&=D\cdot 
\zet{\V{c}(\lambda^{i-2}),\V{c}(\lambda^{i-1}),\dots,\V{c}(\lambda^{N-2}),
\V{c}(F),\V{c}(F\lambda),\dots,\V{c}(F\lambda^{i-2})
}.
\end{align*}
By putting
\begin{equation}\label{eq:tau}
\tau^t_n:
=
\zet{\V{c}(\lambda^{{n-1}}),\V{c}(\lambda^{n}),\dots,\V{c}(\lambda^{N-1}),
\V{c}(F^0\lambda^t),\V{c}(F^0\lambda^{t+1}),\dots,\V{c}(F^0\lambda^{t+n-2})
},
\end{equation}
we get
\begin{equation}\label{eq:modfy-a-and-b}
a_n^t=\frac{\tau^t_n\tau^{{t+1}}_{n+1}}{\tau^{t+1}_{n}\tau^t_{n+1}},\qquad
b_n^t=-\frac{\tau^t_n\tau^{{t+1}}_{n+2}}{\tau^t_{n+1}\tau^{{t+1}}_{n+1}}
\end{equation}
from (\ref{eq:atoalpha}).

\subsection{Double Casorati determinant}\label{sec:2.3}

By choosing a specific isomorphism $\V{c}:\mathcal{O}\to \CC^N$, one can derive an explicit formula for the tau function.
One typical example is the double Casorati determinant formula, which has been given below.

Assume that all eigenvalues of $f(\lambda)$ are distinct. Thus,
\[
f(\lambda)=\textstyle\prod_{i=1}^{N}(\lambda-\lambda_i),\qquad (i\neq j\Rightarrow\lambda_i\neq \lambda_j).
\]
By the Chinese remainder theorem, the following map represents linear isomorphism.
\begin{equation}\label{eq:linear_isom_c}
\V{c}:\mathcal{O}\to \CC^N;\qquad \varphi(\lambda)\mod{f(\lambda)}\mapsto (\varphi(\lambda_1),\dots,\varphi(\lambda_N)).
\end{equation}
For this $\V{c}$, the tau function $\tau_n^t$ can be expressed as
\begin{equation}\label{eq:explicit-form-of-tau}
\tau_n^t=
\left|
\begin{array}{cccc@{\ \vline \ }ccccc}
\lambda_1^{n-1} & \lambda_1^n & \cdots & \lambda_1^{N-1} &  f_1\lambda_1^t & f_1\lambda_1^{t+1} & \cdots & f_1\lambda_1^{t+n-2}   \\ 
\lambda_2^{n-1} & \lambda_2^n & \cdots & \lambda_2^{N-1} &  f_2\lambda_2^t & f_2\lambda_2^{t+1} & \cdots & f_2\lambda_2^{t+n-2}   \\ 
\vdots & \vdots & \cdots & \vdots & \vdots & \vdots & \cdots &\vdots \\ 
\lambda_N^{n-1} & \lambda_N^n & \cdots & \lambda_N^{N-1} &  f_N\lambda_N^t & f_N\lambda_N^{t+1} & \cdots & f_N\lambda_N^{t+n-2}   
\end{array} 
\right|,
\end{equation}
where $\V{c}(F^0)=(f_1,\dots,f_N)^T$.
By applying the Laplace expansion along the columns between $1^{\mathrm{st}}$ and $(N-n+1)^{\mathrm{th}}$ position, we have
\begin{align}  \label{eq:expansionoftau}
\tau_n^t=
\sum_{\genfrac{}{}{0pt}{1}{[N]=\V{j}\sqcup\V{i}}{\sharp\V{i}=n-1}}
(-1)^{\ep}
\prod_{j\in \V{j}}\lambda_j^{n-1} \cdot
\prod_{i\in \V{i}}f_i\lambda_i^{t}\cdot
\prod_{\{j_1<j_2\}\subset \V{j}}(\lambda_{j_1}-\lambda_{j_2})\cdot
\prod_{\{i_1<i_2\}\subset \V{i}}(\lambda_{i_1}-\lambda_{i_2}),
\end{align}
where $\ep=i_1+\dots+i_{n-1}+\frac{(2N-n+2)(n-1)}{2}$.  

\subsection{The dual stable Grothendieck polynomial}\label{sec:2.4}

In some special cases, the tau function is naturally related to the dual stable Grothendieck polynomial.
Let us consider the most degenerate case, wherein
\begin{equation}\label{eq:degenetate}
f(\lambda)=(\lambda-\gamma)^N,\qquad (\gamma\neq 0).
\end{equation}
(This assumption is valid in \S \ref{sec:2.4} only.)
The $\CC$-algebra $\mathcal{O}$, therefore is expressed as 
\[
\mathcal{O}=\CC[\lambda]/((\lambda-\gamma)^N)=\CC[\mu]/(\mu^N),
\]
where $\mu:=\lambda-\gamma$.
The $\CC$-valued functions $c_0,\dots,c_{N-1}$ over $\mathcal{O}$ could be defined as
\[
c_i(\alpha_0+\alpha_1\mu+\alpha_2\mu^2+\cdots+\alpha_{N-1}\mu^{N-1}):=\alpha_i.
\]
Let $\CC[\mathcal{O}]$ be the polynomial ring over $\mathcal{O}\colon\CC[\mathcal{O}]=\CC[c_0,\dots,c_{N-1}]$.

Next, we set $\beta:=\gamma^{-1}$. 
Let
\[
\tau_n=\zet{\V{c}(\lambda^{{n-1}}),\V{c}(\lambda^{n}),\dots,\V{c}(\lambda^{N-1}),
\V{c}(F\lambda),\V{c}(F\lambda^2),\dots,\V{c}(F\lambda^{n-1})
}
\]
be the $n^{\mathrm{th}}$ tau function corresponding to $F\in \mathcal{O}$, which can be rewritten as follows.
\begin{align*}
\tau_n
&=D^{n-1}
\zet
{
\V{c}(1),\V{c}(\lambda),\dots,\V{c}(\lambda^{N-n}),
\V{c}(F\lambda^{-(n-2)}),\dots,\V{c}(F\lambda^{-1}),\V{c}(F)
}\\
&=D^{n-1}
\zet
{
\V{c}(1),\V{c}(\lambda),\dots,\V{c}(\lambda^{N-n}),
\V{c}(F(\tfrac{\gamma-\lambda}{\gamma\lambda})^{n-2}),\dots,\V{c}(F\tfrac{\gamma-\lambda}{\gamma\lambda}),\V{c}(F)
}\\
&=D^{n-1}
\zet
{
\V{c}(1),\V{c}(\mu+\gamma),\dots,\V{c}((\mu+\gamma)^{N-n}),
\V{c}(F(\tfrac{-\mu}{\gamma(\mu+\gamma)})^{n-2}),\dots,\V{c}(F\tfrac{-\mu}{\gamma(\mu+\gamma)}),\V{c}(F)
}\\
&=D^{n-1}\beta^{\frac{(n-1)(n-2)}{2}}
\zet
{
\V{c}(1),\V{c}(\mu),\dots,\V{c}(\mu^{N-n}),
\V{c}(F),\V{c}(F\tfrac{\mu}{\mu+\gamma}),\dots,\V{c}(F(\tfrac{\mu}{\mu+\gamma})^{n-2})
}.
\end{align*} 
We fix the linear isomorphism $\V{c}:\mathcal{O}\to \CC^N$ as
\[
\V{c}:=c_0\V{e}_1+c_1\V{e}_2+\dots+c_{N-1}\V{e}_N,\qquad\quad \V{e}_i=(0,\dots,\mathop{\widehat{1}}^i,\dots,0)^T.
\]
Thus, we have
\begin{gather*}
\V{c}(\mu^{i-1})=\V{e}_i,\quad \V{c}(F(\tfrac{\mu}{\mu+\gamma})^{p-1})=(\kappa_{p,1},\dots,\kappa_{p,N})^T,
\end{gather*}
where
\[
\textstyle \kappa_{p,q}=\sum_{i=0}^\infty {1-p\choose i}\gamma^{1-p-i}c_{q-p-i}(F)
=\beta^{p-1}\sum_{i=0}^\infty {1-p\choose i}\beta^{i}c_{q-p-i}(F).
\]
The tau function $\tau_n$, as an element of $\CC[\mathcal{O}]$, can be expressed as 
\begin{align*}
\tau_n
&=D^{n-1}\beta^{\frac{(n-1)(n-2)}{2}}
\zet
{
\V{e}_1,\V{e}_2,\dots,\V{e}_{N-n+1},
\V{c}(F),\V{c}(F\tfrac{\mu}{\mu+\gamma}),\dots,\V{c}(F(\tfrac{\mu}{\mu+\gamma})^{n-2})
}\\
&=
D^{n-1}\beta^{\frac{(n-1)(n-2)}{2}}\det(\kappa_{p,N-n+1+q})_{p,q=1}^{n-1}\\
&=
D^{n-1}
\det\left(\textstyle\sum_{i=0}^\infty {1-p\choose i}\beta^{i}c_{N-n+1+q-p-i}(F)\right)_{1\leq p,q\leq n-1}.
\end{align*}
Under natural identification\footnote{
The identification $c_i/c_0\leftrightarrow h_i$ (or $c_i/c_0\leftrightarrow (-1)^ih_i$) appears in \cite{ikeda2017peterson} in order to relate the geometrical information of $Fl_n$ with symmetric polynomials.
The origin of this technique dates back to Fulton's historical work \cite[Part III]{fulton_1996}.
On the other hand, this identification can be understood in terms of the {\it Boson-Fermion correspondence}.
See \cite[Section 6]{ikeda2017peterson}.
}
$c_i/c_0\leftrightarrow h_i$, where $h_i=h_i(x_1,x_2,\dots)$ is the $i^{\mathrm{th}}$ complete symmetric polynomial in infinitely many variables $x_1,x_2,\dots$, and the tau function $\tau_n$ is proportional to 
\[
\det\left(\textstyle\sum_{i=0}^\infty {1-p\choose i}\beta^{i}h_{N-n+1+q-p-i}\right)_{1\leq p,q\leq n-1}.
\]
According to~\cite{Shimozono2003stable,lascoux2014finite}, one finds that this expression exactly coincides with the Jacobi--Trudi type formula for the dual $\beta$-Grothendieck polynomial 
\[
g^{(\beta)}_{R_{n-1}},
\]
where $R_k$ is the Young diagram $R_k=((N-k)^{k})$.

\begin{rem}
The dual $\beta$-Grothendieck polynomial reduces to the Schur polynomial when $\beta=0\colon$
$g^{(0)}_{R_k}=s_{R_k}$.
\end{rem}

\begin{rem}\label{rem:comparement}
By a similar method, one can derive the tau function for 'the other' discrete Toda equation $(\ref{eq:otherform})$.
In fact, it is described as
\begin{equation}\label{eq:anothertau}
\tau_n=\zet{\V{b}(1),\V{b}(\lambda),\dots,\V{b}(\lambda^{N-n}),
\V{b}(G),\V{b}(G\lambda),\dots,\V{b}(G\lambda^{n-2})
},
\end{equation}
where $G\in \mathcal{O}':=\CC[\lambda]/(g(\lambda))$, $g(\lambda)$ is the characteristic polynomial of the Lax matrix of $(\ref{eq:otherform})$ $($see, for example, \cite{hirota1995conserved}$)$ and $\V{b}:\mathcal{O}'\to \CC^N$ is an arbitrary linear isomorphism.
Note that $(\ref{eq:anothertau})$ is invariant under the transformation $\lambda\mapsto \lambda+\gamma$, while $(\ref{eq:tau})$ is not.
This implies that one cannot derive the dual Grothendieck polynomial from this expression.
In fact, the tau function $(\ref{eq:anothertau})$ is naturally related to the determinant
\[
\det(h_{N-n+1+q-p})_{1\leq p,q\leq n-1},
\]
which is the Jacobi--Trudi formula for the Schur polynomial $s_{R_{n-1}}$.
\end{rem}

\section{Solution to the ultradiscrete Toda equation}\label{sec:3}

\subsection{A new ultradiscrete evolution equation}\label{sec:3.1}

Equation (\ref{eq:local}) is equivalent to the following.
\begin{equation}\label{eq:subfreeform}
a_{n+1}^{t+1}=\frac{a_{n+1}^t+b_{n+1}^t}{a_{n}^t+b_{n}^t}a_n^t,\qquad 
b_{n}^{t+1}=\frac{a_{n+1}^t+b_{n+1}^t}{a_{n}^t+b_{n}^t}b_n^t.
\end{equation}
Putting $a_n^t=e^{-\frac{A_n^t}{\ep}}$, $b_n^t=e^{-\frac{B_n^t}{\ep}}$, and taking the limit $\ep\to 0^+$, one derives the ultradiscrete evolution equation given by
\begin{equation}\label{eq:udToda}
\begin{cases}
A_{n+1}^{t+1}=(\min[A_{n+1}^t,B_{n+1}^t]-\min[A_{n}^t,B_{n}^t])+A_n^t,\\
B_{n}^{t+1}=(\min[A_{n+1}^t,B_{n+1}^t]-\min[A_{n}^t,B_{n}^t])+B_n^t,\\
A_0^t=A_{N+1}^t=0,\qquad B_0^t=B_N^t=+\infty.
\end{cases}
\end{equation}
\begin{prop}
Let $U_n^t:=A_n^t-B_n^t+L$.
$(U_0^t=U_N^t:=-\infty.)$
Then, $(\ref{eq:udToda})$ refers to the ultradiscrete Toda equation $(\ref{eq:uToda})$.  
\end{prop}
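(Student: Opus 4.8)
The plan is to stay entirely inside the min-plus arithmetic and to verify (\ref{eq:uToda}) by computing the second difference of $U_n^t$ in $t$ directly from (\ref{eq:udToda}). First I would abbreviate $m_n^t := \min[A_n^t, B_n^t]$, so that the two lines of (\ref{eq:udToda}) become $A_{n+1}^{t+1} = A_n^t + (m_{n+1}^t - m_n^t)$ and $B_n^{t+1} = B_n^t + (m_{n+1}^t - m_n^t)$ with a \emph{common} increment. Next I would set $\phi_n^t := \max[0, U_n^t - L]$, which is exactly the quantity appearing on the right of (\ref{eq:uToda}), and record the elementary tropical identity $A_n^t = m_n^t + \phi_n^t$ (because $A_n^t - m_n^t = \max[0, A_n^t - B_n^t] = \max[0, U_n^t - L]$), together with $B_n^t = A_n^t - (U_n^t - L)$, which is just the definition of $U_n^t$. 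This dictionary rewrites every $A$ and $B$ in terms of the pair $(m_n^t, \phi_n^t)$, in which the target equation is phrased.

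With the dictionary in place the computation is two short moves. The first move substitutes $A_n^{t+1} = A_{n-1}^t + m_n^t - m_{n-1}^t$ (the first line of (\ref{eq:udToda}) with $n \mapsto n-1$) and $B_n^{t+1} = B_n^t + m_{n+1}^t - m_n^t$ into $U_n^{t+1} - U_n^t$ and, after applying the dictionary, collapses to the one-step formula
\[
U_n^{t+1} - U_n^t = \phi_{n-1}^t - \phi_n^t + m_n^t - m_{n+1}^t .
\]
The second move is the key auxiliary identity: combining the same two evolution equations and eliminating $A_{n-1}^{t-1}$ via $A_{n-1}^{t-1} = m_{n-1}^{t-1} + \phi_{n-1}^{t-1}$ produces an evolution rule for the minima themselves,
\[
m_n^t = m_n^{t-1} + \phi_{n-1}^{t-1} - \phi_n^t .
\]
I would then form $U_n^{t+1} - 2U_n^t + U_n^{t-1}$ out of the one-step formula and its $t \mapsto t-1$ shift, and use this rule to replace $m_n^t - m_n^{t-1}$ and $m_{n+1}^t - m_{n+1}^{t-1}$; all the $m$-terms and all the time-$(t-1)$ values of $\phi$ cancel, leaving exactly $\phi_{n+1}^t - 2\phi_n^t + \phi_{n-1}^t$. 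Since $U_n^t$ and $A_n^t - B_n^t$ differ by the constant $L$, this is the right-hand side of (\ref{eq:uToda}) and the proof is finished.

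The hard part will be isolating the evolution rule for $m_n^t$ and recognizing that it is exactly what makes the minima telescope out of the second difference; without it, the second difference mixes the times $t$ and $t-1$ and does not visibly reduce to a pure spatial second difference at time $t$. A secondary point to check is the boundary: under $A_0^t = A_{N+1}^t = 0$, $B_0^t = B_N^t = +\infty$ and $U_0^t = U_N^t = -\infty$ one gets $m_0^t = 0$, $\phi_0^t = 0$, $m_N^t = A_N^t$ and $\phi_N^t = 0$, so the identities above survive at the wall-adjacent sites $n = 1$ and $n = N-1$, and (\ref{eq:uToda}) holds throughout the interior $1 \le n \le N-1$, the two extreme instances being the degenerate $-\infty$ relations built into the statement. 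A conceivable alternative is to ultradiscretize (\ref{eq:originalform}) directly: although it appears to contain the subtraction $1 - \delta^2$, each factor equals $(1-\delta^2)(a_n^t + b_n^t)/a_n^t$ by (\ref{eq:u}), so the $1-\delta^2$ cancels in the ratio and the equation is in fact subtraction-free in the $a,b$ variables; ultradiscretizing this together with (\ref{eq:u}) would also give the claim. I would nonetheless favour the direct tropical computation above for self-containedness.
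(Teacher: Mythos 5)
Your proof is correct and is essentially the paper's own argument in different bookkeeping: via your dictionary $A_n^t = m_n^t + \phi_n^t$, your key auxiliary identity $m_n^t = m_n^{t-1} + \phi_{n-1}^{t-1} - \phi_n^t$ is exactly the paper's relation $A_n^t - A_n^{t+1} = \Delta_n$ (with $\Delta_n = \phi_n^t - \phi_{n-1}^t$), and both proofs then telescope the second time-difference of $U_n^t$ using the common-increment relation $A_{n+1}^{t+1} - A_n^t = B_n^{t+1} - B_n^t$. The only difference is that the paper eliminates $B$ and runs the cancellation in the $A$-variables, while you run it in the $(m,\phi)$ coordinates — the same direct min-plus verification.
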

\begin{proof}
We set $\Delta_n:=\max[0,U_{n}^{t}-L]-\max[0,U_{n-1}^t-L]$.
Equation (\ref{eq:udToda}) is equivalent to\footnote{Note the obvious relation $\max[X,Y]=-\min[-X,-Y]$.}
\begin{gather*}
\Delta_n=A_{n}^{t}-A_{n}^{t+1},\qquad
A_{n+1}^{t+1}-A_n^t=B_n^{t+1}-B_n^t.
\end{gather*}
Therefore, we have
\begin{align*}
&U_{n}^{t+1}-2U_n^t+U_n^{t-1}=A_n^{t+1}-2A_n^{t}+A_{n}^{t-1}-B_n^{t+1}+2B_n^t-B_n^{t-1}\\
&=A_n^{t+1}-2A_n^{t}+A_{n}^{t-1}-(A_{n+1}^{t+1}-A_n^{t})+(A_{n+1}^{t}-A_n^{t-1})\\
&=\Delta_{n+1}-\Delta_n,
\end{align*}
which implies the desired result.
\end{proof}
\begin{rem}
Only when $U_1^t\leq L$, one can recover $A_n^t$ and $B_n^t$ from $U_n^t$ using the formula
\[
A_n^t=\sum_{k=1}^{n-1}(U_{n-k}^{t-1}-U_{n-k}^{t})+C,\qquad 
B_n^t=L-U_n^t+\sum_{k=1}^{n-1}(U_{n-k}^{t-1}-U_{n-k}^{t})+C,
\]
where $C$ is an arbitrary number. 
In fact, we have $A_n^t-A_n^{t+1}=\sum_{k=1}^{n-1}(U_{n-k}^{t+1}-2U_{n-k}^{t}+U_{n-k}^{t-1})=
\sum_{k=1}^{n-1}(\Delta_{n-k+1}-\Delta_{n-k})=\Delta_{n}-\Delta_1=\Delta_n$, where the last equality follows from $U_1^t\leq L\Rightarrow\Delta_1=0$.
\end{rem}

\subsection{Tropical permanent solution}\label{sec:3.2}

In this section, we derive a {\it tropical permanent solution} to (\ref{eq:udToda}).
Let $P=(p_{i,j})_{1\leq i,j\leq N}$ $(p_{i,j}\in \RR\cup \{+\infty\})$ be an $N\times N$ matrix.
The tropical permanent 
$\mathrm{TP}\zet{P}$ is an element of $\RR\cup \{+\infty\}$ defined by the formula
$$
\mathrm{TP}\zet{P}:=\min_{\sigma\in \mathfrak{S}_n}\left[
p_{1,\sigma(1)}+p_{2,\sigma(2)}+\dots+p_{N,\sigma(N)}
\right].
$$
We start with the determinantal solution (\ref{eq:explicit-form-of-tau}) to the discrete Toda equation, wherein we define the tropical permanent $T_n^t$ associated with $\tau_n^t$ by 
\begin{equation}\label{eq:tropicalT}
\begin{aligned}
T_n^t:=\mathrm{TP}
&\left|
\begin{array}{cccc@{\ , \ }ccccc}
(n-1)\Lambda_1 & n\Lambda_1 & \cdots & (N-1)\Lambda_1 
\\ 
(n-1)\Lambda_2 & n\Lambda_2 & \cdots & (N-1)\Lambda_2 
\\ 
\vdots & \vdots & \cdots & \vdots 
\\ 
(n-1)\Lambda_N & n\Lambda_N & \cdots & (N-1)\Lambda_N 
\end{array} 
\right.\\
&\hspace{20pt}
\left.
\begin{array}{ccccc}
F_1+t\Lambda_1 & F_1+(t+1)\Lambda_1 & \cdots & F_1+(t+n-2)\Lambda_1   \\ 
F_2+t\Lambda_2 & F_2+(t+1)\Lambda_2 & \cdots & F_2+(t+n-2)\Lambda_2   \\ 
\vdots & \vdots & \cdots &\vdots \\ 
F_N+t\Lambda_N & F_N+(t+1)\Lambda_N & \cdots & F_N+(t+n-2)\Lambda_N 
\end{array} 
\right|.
\end{aligned}
\end{equation}

\begin{prop}\label{prop:positive_roots}
Let $a_n^t=a_n^t(\ep)$, $b_n^t=b_n^t(\ep)$ be real analytic functions of $\epsilon>0$ with 
$$
\epsilon\ll1 \quad\Rightarrow\quad a_n^t,\ b_n^t>0
$$ 
and 
\[
-\lim\limits_{\ep\to 0^+}\ep\log a_n^t=A_n^t,\qquad
-\lim\limits_{\ep\to 0^+}\ep\log b_n^t=B_n^t.
\]
The following two claims, therefore, hold$\colon$
\begin{enumerate}
\item All eigenvalues of the Lax matrix $L^t$ are distinct and positive.
\[
f(\lambda)=\prod_{i=1}^{N}(\lambda-\lambda_i),\qquad 
\epsilon\ll1 \quad\Rightarrow\quad
0< \lambda_N<\dots<\lambda_1.
\]
\item Under linear isomorphism $(\ref{eq:linear_isom_c})$, the image of $F^t\in \mathcal{O}$ (\S \ref{sec:2.1}) satisfies
\[
\epsilon\ll1 \quad\Rightarrow\quad(-1)^{n}f_n^t>0,\qquad \mbox{where}\quad \V{c}(F^t)=(f_1^t,\dots,f_n^t)^T.
\]
\end{enumerate}
\end{prop}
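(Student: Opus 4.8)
The plan is to use the positivity of all entries $a_n^t,b_n^t$, valid for $\ep\ll1$, to place the problem inside the classical theory of tridiagonal spectral problems. First I would make $L^t$ explicit. Since $Y^t$ is lower bidiagonal with unit diagonal, $\det Y^t=1$ and $(Y^t)^{-1}$ is lower triangular with nonnegative entries, so $L^t=(Y^t)^{-1}X^t$ is lower Hessenberg with all superdiagonal entries equal to $1$, diagonal entries $a_j^t+b_{j-1}^t>0$, and strictly positive entries below the diagonal. As a product of the totally nonnegative bidiagonal matrices $(Y^t)^{-1}$ and $X^t$ it is totally nonnegative; it is nonsingular since $\det L^t=\det X^t=\prod_i a_i^t>0$, and its sub- and super-diagonal entries are strictly positive. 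By the Gantmacher--Krein oscillation criterion $L^t$ is then oscillatory, and an oscillatory matrix has $N$ distinct positive eigenvalues; this already gives Claim (1).

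The route I would actually pursue works with characteristic polynomials and simultaneously prepares Claim (2). From $\det Y^t=1$ we have $f(\lambda)=\det(\lambda Y^t-X^t)$, where $\lambda Y^t-X^t$ is tridiagonal with diagonal $\lambda-a_j^t$, superdiagonal $-1$, and subdiagonal $-\lambda b_j^t$. Let $D_j(\lambda)$ be the determinant of its tail block on indices $j,\dots,N$, so that $D_1=f$ and $D_{N+1}=1$; cofactor expansion yields the three-term recursion $D_j=(\lambda-a_j^t)D_{j+1}-\lambda b_j^t D_{j+2}$. I would prove by downward induction on $j$ that $D_j$ has $N-j+1$ distinct positive roots strictly interlacing those of $D_{j+1}$. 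For the inductive step, at a positive root $\mu$ of $D_{j+1}$ the recursion gives $D_j(\mu)=-\mu b_j^t D_{j+2}(\mu)$; since $\mu b_j^t>0$ and, by the induction hypothesis, $D_{j+2}$ interlaces $D_{j+1}$, the values of $D_j$ at consecutive roots of $D_{j+1}$ alternate in sign, and comparing with $D_j(0)=(-1)^{N-j+1}\prod_{i\geq j}a_i^t$ and $D_j(+\infty)=+\infty$ pins exactly one root of $D_j$, necessarily positive, into each gap. The device that turns this $\lambda$-dependent recurrence into an honest orthogonal-polynomial-type one is that all roots in sight are positive, so the coefficient $\lambda b_j^t$ never changes sign; securing this positivity uniformly is the main obstacle, and it is exactly what the oscillation/total-positivity input (or the induction itself) supplies.

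For Claim (2) I would exploit the inverse-spectral identity $\overline{\V{q}}=-F^t\cdot\overline{\V{p}}$ from (\ref{eq:inverseproblem}). Under the evaluation isomorphism (\ref{eq:linear_isom_c}) the $n$-th coordinate of $\V{c}(F^t)$ is $f_n^t=F^t(\lambda_n)$, and evaluating the identity at the root $\lambda_n$ gives $\V{q}_i(\lambda_n)=-F^t(\lambda_n)\,\V{p}_i(\lambda_n)$ for each $i$. The choice $i=1$ is decisive: by Property A, $\V{p}_1\equiv1$, whereas $\V{q}_1=\Delta_{1,1}=\det M_{1,1}$ is precisely the tail determinant $D_2(\lambda)$ on indices $2,\dots,N$. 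Hence $f_n^t=-D_2(\lambda_n)$. By the interlacing just established, exactly $n-1$ of the $N-1$ roots of $D_2$ exceed $\lambda_n$, so $D_2(\lambda_n)$ has sign $(-1)^{n-1}$ and is nonzero; therefore $f_n^t$ has sign $(-1)^n$, that is $(-1)^n f_n^t>0$, which is Claim (2).
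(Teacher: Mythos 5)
Your proposal is correct, and for the decisive step it takes a genuinely different route from the paper. For claim (1), the paper simply observes that $L^t$ is totally nonnegative and irreducible for $\ep\ll 1$ and cites the standard theory of such matrices for distinctness and positivity of the spectrum; your first paragraph (Gantmacher--Krein oscillation criterion) is that same argument with the citation unpacked, modulo one loose phrase: $(Y^t)^{-1}$ is lower triangular, not bidiagonal, and its total nonnegativity needs a word of justification, e.g.\ by writing it as a product of elementary bidiagonal matrices with nonnegative entries. The genuine divergence is claim (2). Both you and the paper start from the identity $F^t=-\det(\lambda Y^t-X^t)_{1,1}$, obtained by comparing first components in (\ref{eq:inverseproblem}) (your observation $\V{p}_1\equiv 1$, $\V{q}_1=\Delta_{1,1}$). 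The paper then applies the Cauchy--Binet formula to convert this minor into $-\det(\lambda E-L^{t+1})_{1,1}$ and invokes, as a black box, the sign-alternation property $p(\lambda_1)>0$, $p(\lambda_2)<0,\dots$ of the principal minor of a totally nonnegative irreducible matrix evaluated at its eigenvalues. You instead never leave the tridiagonal pencil: the three-term recursion $D_j=(\lambda-a_j^t)D_{j+1}-\lambda b_j^t D_{j+2}$, together with $D_j(0)=(-1)^{N-j+1}\prod_{i\ge j}a_i^t$ and monicity of each $D_j$, yields by downward induction that the roots of $D_j$ are positive, simple, and strictly interlace those of $D_{j+1}$ (your sign bookkeeping in the inductive step is right, and the positivity of all roots is indeed what keeps the coefficient $\lambda b_j^t$ of one sign); taking $j=1$ reproves claim (1), and the interlacing of $D_2$ with $f=D_1$ gives $D_2(\lambda_n)$ the sign $(-1)^{n-1}$, hence $(-1)^n f_n^t>0$. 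What your route buys is a self-contained elementary proof (intermediate value theorem only) that handles (1) and (2) in one stroke and makes the interlacing $\lambda_1>\nu_1>\lambda_2>\cdots>\nu_{N-1}>\lambda_N$ explicit; what the paper's route buys is brevity and generality, since the cited sign property holds for arbitrary totally nonnegative irreducible matrices and does not depend on the tridiagonal structure of $\lambda Y^t-X^t$.
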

\begin{proof}  
(1). The distinctness and the positivity of the eigenvalues are direct consequences of the fact that $L^t$ is a {\it totally non-negative and irreducible} matrix (if $\ep\ll 1$).
One may refer the textbook~\cite[Section 5]{pinkus2010totally}.
(2). Comparing the $1^\mathrm{st}$ components on the both sides of (\ref{eq:inverseproblem}), we have $F^t=-\det(\lambda Y^t-X^t)_{1,1}$, where $M_{i,j}$ is the $(i,j)^{\mathrm{th}}$ minor of the matrix $M$.
Using the Cauchy--Binet formula, we derive
\begin{align*}
F^t&=-\det\{(\lambda E -X^t(Y^t)^{-1})\cdot Y^t\}_{1,1}
=-\det(\lambda E -X^t(Y^t)^{-1})_{1,1}\cdot \det Y^t_{1,1}\\
&=-\det(\lambda E-L^{t+1})_{1,1}.
\end{align*}
(The second equality follows from the fact that $Y^t$ is the lower triangle.)
Generally, it is known~\cite[\S 5.3]{pinkus2010totally} that for any totally non-negative and irreducible matrix $A$, the principal minor $p(\lambda)=\det(\lambda E-A)_{1,1}$ satisfies $p(\lambda_1)>0$, $p(\lambda_{2})<0$, $p(\lambda_{3})>0$,\dots, where $0<\lambda_N<\dots<\lambda_1$ are eigenvalues of $A$.
The claim naturally follows from the fact that $L^t$ is totally non-negative and irreducible.
\end{proof}

\begin{prop}\label{prop:Udofdet}
Under the assumptions in proposition \ref{prop:positive_roots},
we write $\Lambda_i=-\lim\limits_{\ep\to 0^+}\ep\log\lambda_i$ and $F_n^t=-\lim\limits_{\ep\to 0^+}\ep\log\zet{f_n^t}$.
Let $\tau_n^t$ be the tau function $(\ref{eq:explicit-form-of-tau})$ and $T_n^t$ be the tropical permanent $(\ref{eq:tropicalT})$.
If $\lambda_1,\dots,\lambda_N$ satisfy the condition\footnote{
If we take $a_n^t,b_n^t$ generically, this condition holds automatically.
}
\begin{equation}\label{eq:generic_condition}
-\lim\limits_{\ep\to 0^+}\ep\log\zet{\lambda_i-\lambda_j}=\min[\Lambda_i,\Lambda_j]
\end{equation}
for any $\lambda_i,\lambda_j$ ($i\neq j$), we have
\[
-\lim\limits_{\ep\to 0^+}\ep \log \zet{\tau_n^t}=T_n^t.
\]
(This implies that the ultradiscretization of $\tau_n^t$ coincides with $T_n^t$.)
\end{prop}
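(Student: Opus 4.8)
The plan is to work entirely from the explicit Laplace expansion (\ref{eq:expansionoftau}), which already presents $\tau_n^t$ as a signed sum over decompositions $[N]=\V{j}\sqcup\V{i}$ with $\sharp\V{i}=n-1$. First I would compute the valuation $-\lim_{\ep\to0^+}\ep\log|\cdot|$ of each individual summand. Valuations are additive on products, so using $\Lambda_i=-\lim\ep\log\lambda_i$, the hypothesis (\ref{eq:generic_condition}) to evaluate each Vandermonde difference as $-\lim\ep\log|\lambda_{j_1}-\lambda_{j_2}|=\min[\Lambda_{j_1},\Lambda_{j_2}]$, and $F_i=-\lim\ep\log|f_i|$, the valuation of the summand indexed by $(\V{i},\V{j})$ is
\[
\sum_{j\in\V{j}}(n-1)\Lambda_j+\sum_{i\in\V{i}}(F_i+t\Lambda_i)+\sum_{\{j_1<j_2\}\subset\V{j}}\min[\Lambda_{j_1},\Lambda_{j_2}]+\sum_{\{i_1<i_2\}\subset\V{i}}\min[\Lambda_{i_1},\Lambda_{i_2}].
\]

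Next I would identify the minimum of these summand-valuations with the tropical permanent $T_n^t$. A permutation contributing to $\mathrm{TP}$ is determined by which rows it sends into the first $N-n+1$ (Vandermonde) columns and which into the last $n-1$ ($F$-)columns; calling these row-sets $\V{j}$ and $\V{i}$ respectively, the permanent factorizes as a minimum over $(\V{i},\V{j})$ of the optimal assignment within each block. Within a block the optimum is dictated by the rearrangement inequality (pair the largest column-exponent with the smallest $\Lambda$), and a direct computation shows it reproduces the expression above, the key identity being that for ordered values $\mu_1\le\cdots\le\mu_m$ one has $\sum_{a<b}\min[\mu_a,\mu_b]=\sum_{k}(m-k)\mu_k$. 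Hence $T_n^t=\min_{(\V{i},\V{j})}(\text{summand-valuation})$.

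The remaining and genuinely essential point is the absence of cancellation: a priori the summands of (\ref{eq:expansionoftau}) attaining the minimal valuation could cancel, forcing the true valuation of $\tau_n^t$ strictly above $T_n^t$. This is exactly where Proposition \ref{prop:positive_roots} enters. Since $0<\lambda_N<\cdots<\lambda_1$, every factor $\lambda_j^{n-1}$, every $\lambda_i^t$, and every difference $\lambda_{j_1}-\lambda_{j_2}$ (with $j_1<j_2$) is strictly positive; and because $\mathrm{sgn}(f_i)=(-1)^i$, the factor $\prod_{i\in\V{i}}f_i$ carries sign $(-1)^{\sum_{i\in\V{i}}i}$. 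Comparing with $(-1)^{\ep}$, where $\ep=\sum_{i\in\V{i}}i+\tfrac{(2N-n+2)(n-1)}{2}$, the $\V{i}$-dependent signs cancel in pairs, so every summand of (\ref{eq:expansionoftau}) shares one and the same global sign. Thus $|\tau_n^t|$ is literally a finite sum of positive real-analytic functions of $\ep$, for which the valuation of the total equals the minimum of the valuations of the parts; this gives $-\lim\ep\log|\tau_n^t|=\min_{(\V{i},\V{j})}(\text{summand-valuation})=T_n^t$. I expect this sign-matching step to be the crux: the valuation bookkeeping and the rearrangement computation are routine once it is established that no cancellation can occur.
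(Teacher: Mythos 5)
Your proposal is correct and follows essentially the same route as the paper's own proof: expand $\tau_n^t$ via the Laplace expansion (\ref{eq:expansionoftau}), use Proposition \ref{prop:positive_roots}(2) together with the sign factor $(-1)^{\ep}$ to see that all summands share a single sign (so no cancellation occurs and the valuation of the sum is the minimum of the valuations), evaluate each summand using the generic condition (\ref{eq:generic_condition}), and match the resulting minimum against the tropical Laplace expansion of $T_n^t$. The only difference is explicitness: the paper cites the tropical Laplace expansion from the literature and leaves the sign bookkeeping implicit when it writes $\zet{\tau_n^t}$ as a sum of positive terms, whereas you derive both points directly, which is a faithful elaboration rather than a different argument.
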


\begin{proof}
From (\ref{eq:expansionoftau}) and the proposition \ref{prop:positive_roots} (2), we have 
\begin{align*} 
 \zet{\tau_n^t}=
\sum_{\genfrac{}{}{0pt}{1}{[N]=\V{j}\sqcup\V{i}}{\sharp\V{i}=n-1}}
\prod_{j\in \V{j}}\lambda_j^{n-1} \cdot
\prod_{i\in \V{i}}|f_i|\lambda_i^{t}\cdot
\prod_{\{j_1<j_2\}\subset \V{j}}(\lambda_{j_1}-\lambda_{j_2})\cdot
\prod_{\{i_1<i_2\}\subset \V{i}}(\lambda_{i_1}-\lambda_{i_2}).  
\end{align*}
Thus, we can derive
\begin{equation*}
  -\lim\limits_{\ep\to 0^+}\ep \log \zet{\tau_n^t}= \min_{\genfrac{}{}{0pt}{1}{[N]=\V{j}\sqcup\V{i}}{\sharp\V{i}=n-1}} \left[ \sum _{k=1}^{N-n+1-k} (N-k)\Lambda _{j_k}+\sum _{l=1}^{n-1} (F_{i_l}+(t+n-1-l)\Lambda _{j_l} \right]
\end{equation*}
using the formula
\begin{equation*} 
  -\lim _{\ep \to 0^+}\ep\log\prod_{\{j_1<j_2\}\subset \V{j}}(\lambda_{j_1}-\lambda_{j_2}) = \sum _{k=1}^{N-n+1}(N-n+1-k)\Lambda _{j_k}.
\end{equation*}
(This formula follows from the condition (\ref{eq:generic_condition}).)
On the other hand, by applying the Laplace expansion for the tropical permanent along the rows between $1^{\mathrm{st}}$ and $(N-n+1)^{\mathrm{th}}$ positions~\cite{nagai2011ultradiscrete}, we get
\begin{align*}  
T_n^t=&
\min_{\genfrac{}{}{0pt}{1}{[N]=\V{j}\sqcup\V{i}}{\sharp\V{i}=n-1}} 
\Biggl[
\mathrm{TP}
\begin{vmatrix}
(n-1)\Lambda_{j_1} & n\Lambda_{j_1} & \cdots & (N-1)\Lambda_{j_1} \\ 
(n-1)\Lambda_{j_2} & n\Lambda_{j_2} & \cdots & (N-1)\Lambda_{j_2} \\ 
\vdots & \vdots & \cdots & \vdots \\ 
(n-1)\Lambda_{j_{N-n+1}} & n\Lambda_{j_{N-n+1}} & \cdots & (N-1)\Lambda_{j_{N-n+1}}  
\end{vmatrix} \\
&\hspace{10pt}
+ \mathrm{TP}
\begin{vmatrix}
 F_{i_1}+t\Lambda_{i_1} & F_{i_1}+(t+1)\Lambda_{i_1} & \cdots & F_{i_1}+(t+n-2)\Lambda_{i_1} \\   
 F_{i_2}+t\Lambda_{i_2} & F_{i_2}+(t+1)\Lambda_{i_2} & \cdots & F_{i_1}+(t+n-2)\Lambda_{i_2} \\   
\vdots & \vdots & \cdots & \vdots \\ 
 F_{i_{n-1}}+t\Lambda_{i_{n-1}} & F_{i_{n-1}}+(t+1)\Lambda_{i_{n-1}} & \cdots & F_{i_{n-1}}+(t+n-2)\Lambda_{i_{n-1}}    
\end{vmatrix}
\Biggr]\\
=& \min_{\genfrac{}{}{0pt}{1}{[N]=\V{j}\sqcup\V{i}}{\sharp\V{i}=n-1}} 
\Biggl[
\sum _{k=1}^{N-n+1} (N-k)\Lambda _{j_k}+\sum _{l=1}^{n-1} (F_{i_l}+(t+n-1-l)\Lambda _{j_l} 
\Biggr].
\end{align*}
Herein, we use the formula 
\begin{align*}  
 \mathrm{TP}
\begin{vmatrix}
\Lambda_{1} & 2\Lambda_{1} & \cdots & N\Lambda_{1} \\ 
\Lambda_{2} & 2\Lambda_{2} & \cdots & N\Lambda_{2} \\ 
\vdots & \vdots & \cdots & \vdots \\ 
\Lambda_{N} & 2\Lambda_{N} & \cdots & N\Lambda_{N}  
\end{vmatrix}   =& \sum _{k=1}^N (N+1-k)\Lambda _k,   
\end{align*}
where $\Lambda _1\leq \Lambda _2\leq \dots \leq \Lambda _N$.  This completes the proof.  
\end{proof}
\begin{rem}
Under the assumptions in the proposition \ref{prop:positive_roots}, 
\begin{equation} 
a_n^t=\frac{|\tau^t_n| |\tau^{{t+1}}_{n+1}|}{|\tau^{t+1}_{n}| |\tau^t_{n+1}|},\qquad
b_n^t=\frac{|\tau^t_n| |\tau^{{t+1}}_{n+2}|}{|\tau^t_{n+1}| |\tau^{{t+1}}_{n+1}|}
\end{equation}
holds.  Thus, the relationships between $T_n^t$ and $A_n^t$, $B_n^t$ are given by  
\begin{equation}\label{eq:solution}
  A_n^t=T^t_n+ T^{{t+1}}_{n+1}-T^{t+1}_{n}-T ^t_{n+1}, \qquad  B_n^t=T^t_n+T^{{t+1}}_{n+2}-T^t_{n+1}- T^{{t+1}}_{n+1}.
\end{equation}
\end{rem}

\subsection{The cellular automaton realization}\label{sec:3.3}

The ultradiscrete evolution equation (\ref{eq:udToda}) can be realized in the form of a cellular automaton as follows.
Consider $N$ cells numbered from $1$ to $N$ (see Figure \ref{fig:ex1}).
At time $t\in \ZZ$, the $n^\mathrm{th}$ cell contains $A_n^t$ 'kickers' and $B_n^t$ 'balls'.
The state at time $t+1$ is obtained by the following rules.
$\bullet$ A kicker kicks out one ball, if it exists, to the cell neighbor to the left.
$\bullet$ A kicker who has no ball to kick out moves to the cell neighbor to the right.
Figure \ref{fig:ex1} illustrates a typical example.
The solution to the ultradiscrete Toda equation (\ref{eq:uToda}) associated with the above scenario is given in Figure \ref{fig:ex2}.

\begin{figure}[htbp]
\begin{center}
\begin{picture}(250,170)
\multiput(0,0)(20,0){14}{\line(0,1){160}}
\multiput(0,0)(0,20){9}{\line(1,0){260}}
\put(-10,165){$n=1$}
\put(27,165){$2$}
\put(47,165){$3$}
\put(67,165){$4$}
\put(85,165){$\cdots$}
\put(205,165){$\cdots$}
\put(225,165){$12$}
\put(245,165){$13$}
\put(-25,145){$t=0$}
\put(-8,126){$1$}
\put(-8,106){$2$}
\put(-8,86){$3$}
\put(-8,66){$\vdots$}
\def\ball#1#2#3#4{%
\dimen0=20pt
\dimen1=20pt
\multiply \dimen0 by #1%
\multiply \dimen1 by #2%
\put(4,150){\hbox{\lower\dimen1 \hbox{\hspace{\dimen0}$#3$}}}%
\put(10,142){\lower\dimen1\hbox{\hspace{\dimen0}$\mathrm{#4}$}}%
}
\def\ballset#1{%
\bgroup
\setcounter{xax}{0}%
\setcounter{yax}{0}%
\def\b##1##2{%
\ball{\thexax}{\theyax}{##1}{##2}%
\addtocounter{xax}{1}%
}%
\def\\ {%
\setcounter{xax}{0}%
\addtocounter{yax}{1}%
}%
#1%
\egroup
}
\ballset{%
\b{.}{i}\b{.}{i}\b{2}{.}\b{1}{i}\b{.}{i}\b{.}{i}\b{1}{.}\b{.}{i}\b{.}{i}\b{.}{i}\b{.}{i}\b{.}{i}\b{.}{\infty}\\
\b{.}{i}\b{.}{i}\b{.}{i}\b{3}{.}\b{.}{i}\b{.}{i}\b{.}{.}\b{1}{i}\b{.}{i}\b{.}{i}\b{.}{i}\b{.}{i}\b{.}{\infty}\\
\b{.}{i}\b{.}{i}\b{.}{i}\b{.}{.}\b{3}{i}\b{.}{i}\b{.}{i}\b{1}{.}\b{.}{i}\b{.}{i}\b{.}{i}\b{.}{i}\b{.}{\infty}\\
\b{.}{i}\b{.}{i}\b{.}{i}\b{.}{i}\b{1}{.}\b{2}{i}\b{.}{i}\b{.}{.}\b{1}{i}\b{.}{i}\b{.}{i}\b{.}{i}\b{.}{\infty}\\
\b{.}{i}\b{.}{i}\b{.}{i}\b{.}{i}\b{.}{i}\b{2}{.}\b{1}{i}\b{.}{i}\b{1}{.}\b{.}{i}\b{.}{i}\b{.}{i}\b{.}{\infty}\\
\b{.}{i}\b{.}{i}\b{.}{i}\b{.}{i}\b{.}{i}\b{.}{i}\b{3}{.}\b{.}{i}\b{.}{.}\b{1}{i}\b{.}{i}\b{.}{i}\b{.}{\infty}\\
\b{.}{i}\b{.}{i}\b{.}{i}\b{.}{i}\b{.}{i}\b{.}{i}\b{.}{.}\b{3}{i}\b{.}{i}\b{1}{.}\b{.}{i}\b{.}{i}\b{.}{\infty}\\
\b{.}{i}\b{.}{i}\b{.}{i}\b{.}{i}\b{.}{i}\b{.}{i}\b{.}{i}\b{1}{.}\b{2}{i}\b{.}{.}\b{1}{i}\b{.}{i}\b{.}{\infty}
}
\end{picture}
\end{center}
\caption{Consider an example of the time evolution (\ref{eq:udToda}), where $N=13$.
Arabic numbers ($1,2,3,\dots$) denote the number of kickers, and Roman numbers ($\mathrm{i},\mathrm{ii},\mathrm{iii},\dots$) denote the number of balls.
A dot denotes $0$.
}\label{fig:ex1}
\end{figure}

\begin{figure}[htbp]
\begin{center}
\verb|..31..2......|\\
\verb|...4..11.....|\\
\verb|...13..2.....|\\
\verb|....22.11....|\\
\verb|.....31.2....|\\
\verb|......4.11...|\\
\verb|......13.2...|\\
\verb|.......2211..|
\end{center}
\caption{Solution to the ultradiscrete Toda equation associated with the example in Figure \ref{fig:ex1}, where $L=1$.
Two solitons can be observed to be proceeding from left to right.
}\label{fig:ex2}
\end{figure}

Another example is given in Figure \ref{fig:ex3}, where a traveling soliton and a static soliton~\cite{hirota2009new} interact.
\begin{figure}[htbp]
\begin{center}
\begin{picture}(200,170)
\multiput(0,0)(20,0){11}{\line(0,1){160}}
\multiput(0,0)(0,20){9}{\line(1,0){200}}
\def\ball#1#2#3#4{%
\dimen0=20pt
\dimen1=20pt
\multiply \dimen0 by #1%
\multiply \dimen1 by #2%
\put(4,150){\hbox{\lower\dimen1 \hbox{\hspace{\dimen0}$#3$}}}%
\put(10,142){\lower\dimen1\hbox{\hspace{\dimen0}$\mathrm{#4}$}}%
}
\def\ballset#1{%
\bgroup%
\setcounter{xax}{0}%
\setcounter{yax}{0}%
\def\b##1##2{%
\ball{\thexax}{\theyax}{##1}{##2}%
\addtocounter{xax}{1}%
}%
\def\\ {%
\setcounter{xax}{0}%
\addtocounter{yax}{1}%
}%
#1%
\egroup%
}
\ballset{%
\b{.}{i}\b{.}{i}\b{2}{.}\b{1}{i}\b{.}{i}\b{.}{iii}\b{.}{i}\b{.}{i}\b{.}{i}\b{.}{\infty}\\
\b{.}{i}\b{.}{i}\b{.}{i}\b{3}{.}\b{.}{i}\b{.}{iii}\b{.}{i}\b{.}{i}\b{.}{i}\b{.}{\infty}\\
\b{.}{i}\b{.}{i}\b{.}{i}\b{.}{.}\b{3}{i}\b{.}{iii}\b{.}{i}\b{.}{i}\b{.}{i}\b{.}{\infty}\\
\b{.}{i}\b{.}{i}\b{.}{i}\b{.}{i}\b{1}{.}\b{2}{iii}\b{.}{i}\b{.}{i}\b{.}{i}\b{.}{\infty}\\
\b{.}{i}\b{.}{i}\b{.}{i}\b{.}{i}\b{.}{ii}\b{3}{i}\b{.}{i}\b{.}{i}\b{.}{i}\b{.}{\infty}\\
\b{.}{i}\b{.}{i}\b{.}{i}\b{.}{i}\b{.}{iii}\b{1}{.}\b{2}{i}\b{.}{i}\b{.}{i}\b{.}{\infty}\\
\b{.}{i}\b{.}{i}\b{.}{i}\b{.}{i}\b{.}{iii}\b{.}{i}\b{2}{.}\b{1}{i}\b{.}{i}\b{.}{\infty}\\
\b{.}{i}\b{.}{i}\b{.}{i}\b{.}{i}\b{.}{iii}\b{.}{i}\b{.}{i}\b{3}{.}\b{.}{i}\b{.}{\infty}
}
\end{picture}
\hspace{20pt}
\begin{picture}(100,170)
\def\bal#1#2#3{%
\dimen0=8pt
\dimen1=20pt
\multiply \dimen0 by #1%
\multiply \dimen1 by #2%
\put(0,145){\hbox{\lower\dimen1 \hbox{\hspace{\dimen0}$#3$}}}%
}
\def\balset#1{%
\bgroup
\setcounter{xax}{0}%
\setcounter{yax}{0}%
\def\d##1{%
\bal{\thexax}{\theyax}{##1}%
\addtocounter{xax}{1}%
}%
\def\\ {%
\setcounter{xax}{0}%
\addtocounter{yax}{1}%
}%
\def\u##1{%
\underline{##1}%
}%
#1%
\egroup
}
\balset{%
\d{.}\d{.}\d{3}\d{1}\d{.}\d{\u{2}}\d{.}\d{.}\d{.}\d{.}\\
\d{.}\d{.}\d{.}\d{4}\d{.}\d{\u{2}}\d{.}\d{.}\d{.}\d{.}\\
\d{.}\d{.}\d{.}\d{1}\d{3}\d{\u{2}}\d{.}\d{.}\d{.}\d{.}\\
\d{.}\d{.}\d{.}\d{.}\d{2}\d{.}\d{.}\d{.}\d{.}\d{.}\\
\d{.}\d{.}\d{.}\d{.}\d{\u{1}}\d{3}\d{.}\d{.}\d{.}\d{.}\\
\d{.}\d{.}\d{.}\d{.}\d{\u{2}}\d{2}\d{2}\d{.}\d{.}\d{.}\\
\d{.}\d{.}\d{.}\d{.}\d{\u{2}}\d{.}\d{3}\d{1}\d{.}\d{.}\\
\d{.}\d{.}\d{.}\d{.}\d{\u{2}}\d{.}\d{.}\d{4}\d{.}\d{.}
}
\end{picture}
\end{center}
\caption{Collision between a traveling soliton and a static soliton. ($L=1$).
An underlined number denotes a negative integer ($\underline{2}=-2$).
}\label{fig:ex3}
\end{figure}

\section{Example and Concluding Remarks}\label{sec:4}

\subsection{Example}

As seen above, one can construct the　tropical permanent solution to the ultradiscrete Toda equation for {\it any} initial state. 
The method we use here can be referred to as the {\it tropical inverse scattering method}.
In this section, we demonstrate the method to construct the tropical permanent solution through an example.

Let us consider the initial state ($N=8$, $L=1$)
\begin{gather*}
(A_1^0,A_2^0,\dots,A_8^0)=(1,1,1,1,0,0,0,0),\quad
(B_1^0,B_2^0,\dots,B_7^0)=(2,2,2,4,1,1,1).
\end{gather*}
Let 
$q:=e^{-\frac{1}{\ep}}$.
Set the initial values
\begin{gather*}
(a_1^0,a_2^0,\dots,a_8^0)=(q,2q,3q,4q,1,2,3,4),\quad
(b_1^0,b_2^0,\dots,b_7^0)=(q^2,q^2,q^2,q^4,q,q,q)
\end{gather*}
for the discrete Toda equation.
The characteristic polynomial of the Lax matrix $L^0$ is calculated as 
$f(\lambda)=\sum_{i=0}^8(-1)^iI_i\lambda^{8-i}$, where 
\begin{gather*}
I_0=1,\quad
I_1=10+13q+3q^2+q^4,\quad 
I_2=35+115q+96q^2+24q^3+10q^4+8q^5+2q^6,\\
I_3=50+368q+605q^2+360q^3+102q^4+63q^5+41q^6+8q^7,
\cdots,I_8=576q^4.
\end{gather*}
For sufficiently small $q>0$, the roots $0<\lambda_8<\dots<\lambda_1$ of $f(\lambda)$ could be expanded as 
\begin{gather*}
\textstyle
\lambda_1=4+4q-4q^2+\cdots,\quad
\textstyle
\lambda_2=3+{{9q^2}\over{2}}-{{81q^3}\over{4}}+\cdots,\\
\textstyle
\lambda_3=2-2q^2+6q^3+\cdots,\quad
\textstyle
\lambda_4=1-q+{{3q^2}\over{2}}+\cdots,\\
\textstyle
\lambda_5=4q+4q^2-4q^3+\cdots,\quad
\lambda_6=3q+\frac{9}{2}q^3+\cdots,\quad
\lambda_7=2q+\cdots,\quad 
\lambda_8=q+\cdots.
\end{gather*}
(Higher-order terms have been omitted owing to limitations of space. 
If one needs to execute all calculations below, many higher-order terms would be required. 
For example, $\lambda_1$ must be calculated up to the term of $q^{13}$, whose coefficient is ${{27560920906072627}\over{11337408}}$.)
For each $\lambda_i$, $f_i=-\det(\lambda_i Y-X)_{1,1}$ is calculated as follows$\colon$
\begin{gather*}
\textstyle
f_1=-\frac{32}{3}q^{13}+\cdots,\quad 
f_2=\frac{9}{2}q^{12}+\cdots,\quad
f_3=-4q^{11}+\cdots,\quad
f_4=6q^{10}+\cdots,\\
f_5=-256q^{6}+\cdots,\quad
f_6=108q^5+\cdots,\quad
f_7=-96q^4+\cdots,\quad
f_8=144q^3+\cdots.
\end{gather*}
Therefore, we have 
\[
(\Lambda_1,\dots,\Lambda_8)=(0,0,0,0,1,1,1,1),\qquad (F_1,\dots,F_8)=(13,12,11,10,6,5,4,3).
\]
Substituting these datum values to the tropical permanent $T_n^t$ (\ref{eq:tropicalT}), we obtain:
\def\Trop{}
\begin{gather*}
T_1^t=6,\quad
T_2^t=\min[20,t+9],\quad
T_3^t=\min[35,t+22,2t+13],\\
T_4^t=\min[51,t+36,2t+25,3t+18],\quad
T_5^t=\min[68,t+51,2t+38,3t+29,4t+24],\\
T_6^t=\min[t+67,2t+52,3t+41,4t+34],\quad
T_7^t=\min[2t+67,3t+54,4t+45],\\
T_8^t=\min[3t+68,4t+57],\quad
T_9^t=4t+70.
\end{gather*}
Further, substituting them into the formulas
$A_n^t=T_n^t+T_{n+1}^{t+1}-T_n^{t+1}-T_{n+1}^t$,
$B_n^t=T_n^t+T_{n+2}^{t+1}-T_{n+1}^{t+1}-T_{n+1}^t$, and
$U_n^t=L+A_n^t-B_n^t=1+A_n^t-B_n^t$,
we obtain the solution to the ultradiscrete Toda equation (\ref{eq:uToda}), as represented in Figure \ref{fig:4}.

\begin{figure}[htbp]
\begin{center}
\begin{picture}(50,150)
\def\bal#1#2#3{%
\dimen0=6pt
\dimen1=14pt
\multiply \dimen0 by #1%
\multiply \dimen1 by #2%
\put(0,145){\hbox{\lower\dimen1 \hbox{\hspace{\dimen0}$#3$}}}%
}
\def\balset#1{%
\bgroup
\setcounter{xax}{0}%
\setcounter{yax}{0}%
\def\d##1{%
\bal{\thexax}{\theyax}{##1}%
\addtocounter{xax}{1}%
}%
\def\\ {%
\setcounter{xax}{0}%
\addtocounter{yax}{1}%
}%
\def\u##1{%
\underline{##1}%
}%
#1%
\egroup
}
\balset{%
\hbox to 0pt{$t=0$ -- $9$}
\d{.}\d{.}\d{.}\d{\u{2}}\d{.}\d{.}\d{.}\\
\d{.}\d{.}\d{.}\d{\u{1}}\d{.}\d{.}\d{.}\\
\d{.}\d{.}\d{.}\d{.}\d{.}\d{.}\d{.}\\
\d{.}\d{.}\d{.}\d{1}\d{.}\d{.}\d{.}\\
\d{.}\d{.}\d{.}\d{2}\d{.}\d{.}\d{.}\\
\d{.}\d{.}\d{1}\d{1}\d{1}\d{.}\d{.}\\
\d{.}\d{.}\d{2}\d{.}\d{2}\d{.}\d{.}\\
\d{.}\d{1}\d{1}\d{1}\d{1}\d{1}\d{.}\\
\d{.}\d{2}\d{.}\d{2}\d{.}\d{2}\d{.}\\
\d{1}\d{1}\d{1}\d{1}\d{1}\d{1}\d{1}
}
\end{picture}
\hspace{60pt}
\begin{picture}(50,150)
\def\bal#1#2#3{%
\dimen0=6pt
\dimen1=14pt
\multiply \dimen0 by #1%
\multiply \dimen1 by #2%
\put(0,145){\hbox{\lower\dimen1 \hbox{\hspace{\dimen0}$#3$}}}%
}
\def\balset#1{%
\bgroup
\setcounter{xax}{0}%
\setcounter{yax}{0}%
\def\d##1{%
\bal{\thexax}{\theyax}{##1}%
\addtocounter{xax}{1}%
}%
\def\\ {%
\setcounter{xax}{0}%
\addtocounter{yax}{1}%
}%
\def\u##1{%
\underline{##1}%
}%
#1%
\egroup
}
\balset{%
\hbox to 0pt{$t=10$ -- $19$}
\d{2}\d{.}\d{2}\d{.}\d{2}\d{.}\d{2}\\
\d{1}\d{1}\d{1}\d{1}\d{1}\d{1}\d{1}\\
\d{.}\d{2}\d{.}\d{2}\d{.}\d{2}\d{.}\\
\d{.}\d{1}\d{1}\d{1}\d{1}\d{1}\d{.}\\
\d{.}\d{.}\d{2}\d{.}\d{2}\d{.}\d{.}\\
\d{.}\d{.}\d{1}\d{1}\d{1}\d{.}\d{.}\\
\d{.}\d{.}\d{.}\d{2}\d{.}\d{.}\d{.}\\
\d{.}\d{.}\d{.}\d{1}\d{.}\d{.}\d{.}\\
\d{.}\d{.}\d{.}\d{.}\d{.}\d{.}\d{.}\\
\d{.}\d{.}\d{.}\d{\u{1}}\d{.}\d{.}\d{.}
}
\end{picture}
\end{center}
\caption{Example of the solution to the ultradiscrete Toda equation. ($N=8,L=1$).}
\label{fig:4}
\end{figure}

\subsection{Concluding remarks}

The primary objective of this paper has been to understand the algebraic structure and positivity of the discrete Toda equation with boundary conditions.
The solution itself is constructed in a straightforward manner.
Under certain natural identifications, it is possible to obtain a family of special solutions, which correspond to dual $\beta$-Grothendieck polynomials, which is the $K$-theoretic analogue of Schur polynomials.
One could infer that the discrete Toda equation is of the 'Grothendieck' polynomial type ($\simeq  K$-theoretical), while the other discrete Toda equation is of the 'Schur' polynomial type.
This result can be expected to clarify deeper structures of the two discrete Toda equations.

The ultradiscrete analogues of the Toda equation have also been studied.
It is proved that the ultradiscrete Toda equation reduces to a new evolution equation, which is the ultradiscretization of the Lax formula.
The tropical permanent solutions have also been given. 
Our result shows the correspondence between the determinant solution to the discrete Toda equation and the tropical permanent solution to the ultradiscrete Toda equation.  Moreover, the cellular automaton realization of the Toda equation is proposed.
The proposed method has the inherent advantage of being applicable to arbitrary initial values.
Especially, we have generalized Hirota's formula for static-solitons.

\section*{Acknowledgment}

One of the authors (S.I.) is partially supported by KAKENHI (26800062).
We would like to thank Editage (www.editage.jp) for English language editing.

\bibliographystyle{abbrv-iwao}
\bibliography{IN-ref}

\end{document}